\documentclass{article}
\pdfpagewidth=8.5in
\pdfpageheight=11in
\usepackage{ijcai21}

\usepackage{times}
\usepackage{soul}
\usepackage{url}
\usepackage[hidelinks]{hyperref}
\usepackage[utf8]{inputenc}
\usepackage[small]{caption}
\usepackage{graphicx}
\usepackage{amsmath}
\usepackage{amsthm}
\usepackage{booktabs}
\usepackage{algorithm}
\usepackage{algorithmic}
\usepackage{tikz} 
\usepackage{bm}
\usepackage{hyperref}
\usepackage{dsfont,nicefrac}
\urlstyle{same}


\newtheorem{example}{Example}
\newtheorem{theorem}{Theorem}
\newtheorem{prop}{Proposition} 
\newtheorem{cor}{Corollary} 

\title{Bounding the Inefficiency of Route Control in Intelligent Transport Systems}

\author{
Charlotte Roman \And Paolo Turrini
\affiliations
University of Warwick, UK\\
\emails
\{c.d.roman,p.turrini\}@warwick.ac.uk
}



\begin{document}

\maketitle

\begin{abstract}
 Route controlled autonomous vehicles could have a significant impact in reducing congestion in the future. Before applying multi-agent reinforcement learning algorithms to route control, we can model the system using a congestion game to predict and mitigate potential issues.
 We consider the problem of distributed operating systems in a transportation network that control the routing choices of their assigned vehicles. We formulate an associated network control game, consisting of multiple actors seeking to optimise the social welfare of their assigned subpopulations in an underlying nonatomic congestion game. Then we find the inefficiency of the routing equilibria by calculating the Price of Anarchy for polynomial cost functions. Finally, we extend the analysis to allow vehicles to choose their operating system.
\end{abstract}

\section{Introduction}
  Reducing traffic congestion has been a goal of many cities for decades, with benefits including faster travel times and decreased air pollution. With the prevalence of automatic route planners such as GPS navigation, Google Maps, Waze, etc., intelligent routing systems have the ability to significantly ease congestion through coordinating traffic \cite{Cheng2020}. 
  Autonomous vehicles are bound to increase the importance of such systems even further, as controlling the exact routing of vehicles is done with minimal input from drivers. 
  
  Due to the size and complexity of the problem, multi-agent reinforcement learning (MARL) algorithms are a natural candidate for network control to optimise congestion. But first we need to predict the challenges and critical issues that may have a significant impact on their successful implementation.
  Although arguably beneficial in many ways, the presence of multiple planners poses the problem of assessing their impact on the system as a whole. The efficiency of using navigation applications as socially beneficial route planners is currently an open problem \cite{Dafoe2020}. 

  In Distributed Artificial Intelligence (DAI), congestion games \cite{Rosenthal} have emerged as a reference model to analyse the inefficiency of traffic flows, with important implications for the design of better road systems \cite{Wu2019}. In congestion games, self-interested players travel between origin and destination nodes on a network, choosing the paths that minimise their travel time. Players' route choices constitute a Nash (or user) equilibrium when there is no incentive to unilaterally deviate to alternative ones. We then want to compare these equilibria against the total travel times, yielding the players' social welfare. The reference measure of inefficiency is the Price of Anarchy (PoA) \cite{Koutsoupias2009}, which compares the worst Nash equilibrium routing with that of the optimal flow. 

  While Nash equilibria are important predictors, it is also well-known that their assumptions on individuals' rationality are often not met in practice. In large transportation networks, it is often the case that individuals have incomplete knowledge of the network (see, e.g., the bounded rationality approaches in \cite{Acemoglu,Meir2018}) and rely on personal operating systems to figure out their optimal route. This intermediate perspective, where competing controllers act on the same network, has been surprisingly overlooked in the congestion game literature.

\paragraph{Our Contribution}
  In this paper, we study intelligent routing systems that act as distributed controllers on a traffic network, and we analyse their impact on the overall efficiency. 
   We develop a two-level game where operating systems have control over the routing choices of the nonatomic congestion game. Each operating system controls a finite predetermined fraction of the total traffic with the goal of minimising the travel time incurred by that fraction only. 
  We then study equilibrium efficiency, showing that the Price of Anarchy is highest when the allocation of vehicles to operating systems is (approximately) proportional. We also give Price of Anarchy bounds over polynomial cost functions, depending on the polynomial degree and the number of controllers and give a MARL example to show that this Price of Anarchy occurs in practice. Finally, we allow vehicles to choose their operating system, showing that the equilibrium reached has the highest total cost.

\paragraph{Related Literature} 
  Our work connects to a number of research lines in algorithmic game theory focusing on the quality of equilibria in congestion games and resource allocation, and the research in DAI studying planning and control with boundedly rational agents.

  Congestion games were first proposed by Rosenthal as a game-theoretic model of resource allocation \cite{Rosenthal} and then widely adopted for modelling road networks. These were initially studied in the transportation literature by Wardrop \cite{Wardrop}, who established the conditions for a system equilibrium to exist when all travellers have minimum and equal costs. The key metric for equilibrium analysis in congestion games is the Price of Anarchy \cite{Koutsoupias2009}, taken by measuring the ratio between the worst possible Nash equilibrium and the social optimum, and known to be independent of network topology \cite{Roughgarden2003}. 

  From the point of view of distributed control, an important related model are Stackelberg routing games, where a portion of the total flow is controlled centrally by a ``leader'', while the ``followers'' play as selfish vehicles. Stackelberg routing was first proposed by \cite{Korilis1997}, characterising which instances are optimal. Roughgarden \cite{Roughgarden2004} found the ratio between worst-case and best-case costs in these games, and the impact of Stackelberg routing on the PoA has been also been established for general networks \cite{Bonifaci2010}. Single-leader Stackelberg equilibria in congestion games have been looked at, and it is known that they cannot be approximated in polynomial time \cite{Castiglioni2019}. Multi-leader Stackelberg games are, instead, largely unexplored in this context \cite{Castiglioni2019a}. Our approach features multiple leaders, but not Stackelberg-like ``followers", which impacts our results on the PoA.

  Information design, which is closely related to our approach, has more recently been considered as a mechanism to reduce congestion \cite{Acemoglu2016InformationalCongestion,Meir2018,Roman}. The information constrained variant of nonatomic congestion game was first introduced to show that information could cause vehicles to change their departure times in such a way as to exacerbate congestion rather than ease it \cite{Arnott}. \cite{Bergemann2013} showed the set of outcomes that can arise in equilibrium for some information structure is equal to the set of Bayes correlated equilibria. \cite{Das2017} considered an information designer seeking to maximise welfare and restore efficiency through signals using information design. \cite{Tavafoghi2018} showed that the socially efficient routing outcome is achievable through public and private information mechanisms. \cite{Ikegami2020} consider a centralised mediator to recommend routing to users taking into account their preferences for incomplete information games. Our work differs from the private information design literature as in our model the operating systems control the routing rather than provide signals, and there are multiple agents attempting to optimise group welfare.

  Network control games can be seen as resource allocation games where the resources are edges in a network and the potential function is given by the total cost of all players' travel times. Distributed resource allocation problems aim to allocate a set of resources for optimal utilisation, such as distributed welfare games \cite{Marden2013} and cost-sharing protocols \cite{Chen2010}. A recent survey of game-theoretic control of networked systems highlights the other major advancements applications \cite{Wu2019}.  

  Finally, distributed welfare games \cite{Marden2013} utilise game-theoretic control for distributed resource allocation where the distribution rule is chosen to maximise the welfare of resource utilisation. Different distribution rules can be compared by their desirable properties such as scalability, the existence of Nash equilibria, Price of Anarchy, and Price of Stability. In this context, protocols have been studied to improve equilibria of network cost-sharing games \cite{Chen2010}, while \cite{Hao2018} studied welfare-optimising designers under full and partial control. 

\paragraph{Paper Outline}
   We begin with outlining the necessary preliminaries in Section \ref{sec:prelim}. Section \ref{sec:ncg} introduces the network control games and studies equilibrium existence. Section \ref{sec:ineff} calculates the Price of Anarchy over polynomial cost functions. Finally, we analyse the extension where vehicles choose their operating system in Section \ref{sec:osg}.

\section{Preliminaries} \label{sec:prelim}
    Let $N=\{ 1,...,n\}$  be a nonempty finite set of player (or vehicle) populations such that players in the same population share the same available route choices (or {\em strategy set}). For population $i\in N$, the {\em demand} for a population, i.e., the traffic volume associated with that population, is $d_i> 0$. Each population $i$ has a nonempty finite resource set $E_i$ made up of \textit{relevant} resources, i.e., those edges which are used in at least one route choice, $S_i \subseteq 2^{E_i}$, where $S_i$ is the strategy set of $i$.
    
    Suppose that individuals have limited knowledge of the routing options, i.e., we assume there exist $K_i \geq 1$ information types in each population $i \in N$.  We refer to a player from population $i$ of type $k$ as $k$, where the demand for a type is $d_{k}\geq 0$. Information types can restrict knowledge of the resources, i.e., each population-type pair is associated with a {\em known} set $E_{k} \subseteq E_i$. Formally, a \textit{nonatomic information constrained congestion game} is defined as a tuple $(N,(K_i),(E_{k}),(S_{k}), (c_e)_{e \in E}, (d_{k}))$, with $i \in N$, $k \in K_i$.
    
    The outcome of all players of type $k$ choosing strategies leads to a vector $\bm{x}^{k}$ satisfying $\sum_{s_{k} \in S_{k}} x^{k}_{s_{k}} = d_{k}$ and $x^{k}_{s_{k}} \geq 0, \,  \forall s_{k} \in S_{k}$. In a \textit{strategy distribution}, $\bm{x}:=(\bm{x}^k)_{k \in K_i, i \in N}$, a player of knowledge type $k$ incurs a \textit{cost} of $C_{k}(s_{k},\bm{x}):= \sum_{e \in s_{k}} c_e(f_e(\bm{x}))$ when selecting strategy $s_{k} \in S_{k}$. An \textit{information constrained user equilibrium} (ICUE) \cite{Acemoglu2016InformationalCongestion} is a strategy distribution $\bm{x}$ such that all players choose a strategy of minimum cost: $\forall i \in N, \, k \in K_i$ and strategies $s_{k}, s'_{k} \in S_k$ such that $x^k_{s_{k}}>0$ we have $C_{k}(s_{k},\bm{x}) \leq C_{k}(s'_{k},\bm{x})$. Every player of the same knowledge type has the same cost at a UE $\bm{x}$, denoted $C_{k}(\bm{x})$. 
    The \textit{social cost} of $\bm{x}$ is the total cost incurred in by all players, formally $SC(\bm{x}):=\sum_{i \in N} \sum_{k \in K_i} C_{k} (\bm{x})d_{k}$. Strategy distribution $\bm{x}$ is a \textit{social optimum} (SO) if it solves $\min_{\bm{x}} SC(\bm{x})$, such that $\sum_{s_k \in S_{k}} x_{k}^{s_k} = d_{k},\forall i \in N,\, k \in K_i,\, x_{k}^{s_k} \geq 0$. 
	
	In most cases, the SO solution is different to the UE solution since players only maximise their individual utility. Pigou \cite{Pigou1920} was the first to show this on a network with one origin and one destination and two parallel edges joining them, for a population of size 1. The cost of the first of the edges is constant at 1, and the second costs the same as the proportion of players that choose it. The UE here is for all players to use the second edge which gives a social cost of 1, whereas the optimal routing is to split players equally along edges for a social cost of $\nicefrac{3}{4}$.
	
	 The efficiency of the UE when compared with the SO is the \textit{Price of Anarchy} (PoA). It is defined as the ratio between the social cost of a SO outcome and the worst social cost of a UE. For any UE (or Nash equilibrium) $\bm{y}$,
	\[PoA = \frac{\max_{\bm{y}} SC(\bm{y})}{\min_{\bm{x}} SC(\bm{x})}.\] 
	For example, in Pigou's network, the Price of Anarchy is $\nicefrac{4}{3}$.
	
	An \textit{exact potential game} is one that can be expressed using a single global payoff function called the {\em potential function}. More formally, a game is an exact potential game if it has a potential function $\Phi:A \rightarrow \mathds{R}$ such that $\forall a_{-i}\in A_{-i}, \, \forall a_i,\, a'_i \in A_i$, $\Phi(a_i,a_{-i})- \Phi(a'_i,a_{-i}) = u_i(a_i,a_{-i})-u_i (a'_i,a_{-i})$. Here the notation $-i$ means all players in $N$ excluding $i$ i.e., $\{1,..,i-1,i+1,..,N\}$.  All nonatomic congestion games are exact potential games  \cite{Monderer1996}.
	
\section{Network Control Games} \label{sec:ncg}
    We now assume that the routing choices of vehicles in a nonatomic congestion game $\mathcal{M}$ are controlled by a set of operating systems $R$, where each operating system aims to minimise the total travel cost of the (nonempty) portion of vehicles assigned to them $N_r$, where $r \in R$ and $N_r \subseteq N$. 
    
     The way in which the operating systems have control over the routing choices is by choosing which knowledge set is available to each player. Thus, the operating systems control the demand for each knowledge type within the fraction of flow they control. For instance, a navigation app would give its users a choice between multiple routes; drivers have imperfect information about the available network. Autonomous vehicles may not give their passengers a choice of route. In this case, the knowledge set would contain only the route that the autonomous vehicle follows. 

    Let the size of each population $i \in N$ controlled by $r \in R$ be denoted $d_i^r$, where $\sum_{r \in R} d_i^r = d_i$ and $d_i^r=0$, for $i \notin N_r$.    We can view the game as an information design problem where a player $r$ partitions populations in $N_r$ into sets of information types $\bm{K}_r=(K_{i})_{i \in N_r}$ to minimise the social cost of $N_r$. Thus, the operating systems chooses the information type demands $d_k$ such that $\forall i \in N_r$, $\sum_{k \in (\bm{K}_r)_i} d_k = d_i^r$. Let the strategy space for operating systems be $D_{\kappa_r}$ where  $\kappa_r$ is the set of all irredundant information sets $K_i$ for any $i \in N_r$. Moreover, for any $\bm{d} \in D_{\kappa_r}$ and $\forall i \in N_r$, we have $\sum_{k \in \kappa_r}d_k \mathds{1}_{K_i}(k) = d_i^r$, where $\mathds{1}$ is the indicator function.  Let the combined strategy space of all operating systems be $D_\kappa$, where $\kappa$ is the set of all possible irredundant information types for populations in $N$. 
    
    Now, we can define a \textit{network control game} to be a tuple $(\mathcal{M}, R, (N_r)_{r \in R}, (d_i^r)_{i \in N_r}, (D_{\kappa_r})_{r \in R})$ where $\mathcal{M}$ is a nonatomic congestion game, $R$ is the set of operating systems, $N_r$ is the population controlled by $r \in R$, $d_i^r$ is the demand of population $i$ controlled by $r$, and $D_{\kappa_r}$ is the strategy space of $r$.
    
   The \textit{share of control} of operating system $r$ is $\frac{\sum_{i \in N_r}d^r_i}{\sum_{i \in N}  d_i}$. If an operating system has a share of control equal to one, then we say it has \textit{full control} of the game. The control of $r$ over a population $i$ is instead defined as $\frac{d^r_i}{d_i}$. If $\forall r \in R$ and $\forall i \in N$, the control of $r$ over population $i$ is $|R|^{-1}$, then we say that the game is \textit{proportional}.
    
    Observe now that the outcome of all operating systems' strategies $\bm{d}:=(\bm{d}^r)_{r \in R}$ leads to an ICUE $\bm{x}$ in the underlying game. Given this, the cost function  of an operating system $C_r: D_{\kappa_r} \rightarrow \mathds{R}_{\geq 0}$  is defined as $C_r(\bm{d}^r,\bm{d}^{-r}):= \sum_{k \in \kappa_r} C_k(\bm{x})d^r_k $  $\forall r \in R$, where $\bm{x}$ is the ICUE from $(\bm{d}^r,\bm{d}^{-r})$. Here the notation $-r$ means all players in $R$ excluding $r$. For instance, we use $C_r(\bm{d})$ and $C_r(\bm{d}^r,\bm{d}^{-r})$ interchangeably.
        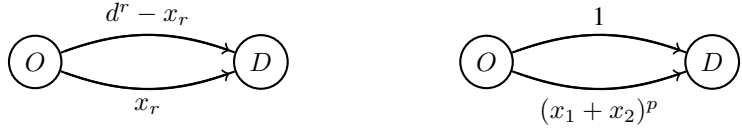
\begin{figure*} [ht!]
	\begin{center}
		\begin{tikzpicture}[shorten >=2pt, thick]
		\node[circle,draw] (A) at (0,0) {$O$};
		\node[circle,draw] (B) at (3,0) {$D$};
		\draw[->] (A) to[bend left=20] node[above] {$d^r-x_r$} (B);
		\draw[-] (B) to[bend right=20] node[above] {} (A);
        \draw[->] (A) to[bend left=-20] node[below] {$x_r$} (B);
		\draw[-] (B) to[bend right=-20] node[below] {} (A);
		\node[circle,draw] (A2) at (6,0) {$O$};
		\node[circle,draw] (B2) at (9,0) {$D$};
		\draw[->] (A2) to[bend left=20] node[above] {1} (B2);
		\draw[-] (B2) to[bend right=20] node[above] {} (A2);
        \draw[->] (A2) to[bend left=-20] node[below] {$(x_1+x_2)^p$} (B2);
		\draw[-] (B2) to[bend right=-20] node[below] {} (A2);
		\end{tikzpicture}
	\end{center}
	\caption{A Pigou network with two operating systems. Left: The strategy for operating system $r \in R$ where $x_r \in [0,d^r]$. Right: The edge costs where $p\geq 0$ where $x_1$ and $x_2$ are defined from the flows in Left. } 
	\label{fig:pigou_flows}
	\end{figure*}
    
    An outcome $\bm{d}$ is then a Nash equilibrium of the network control game if, and only, if, $\forall r \in R$ we have $C_r(\bm{d}) \leq C_r(\bm{d}', \bm{d}^{-r})$, $\forall \bm{d}' \in D_{\kappa_r}$. We can show the existence of Nash equilibria in network control games by showing that these are, in fact, exact potential games. 
    
    \begin{prop}
    A network control game is an exact potential game for potential $\Phi$ defined as 
    \[ \Phi(\bm{d}) := \sum_{e \in E} \int_0^{f_e(\bm{x})} c_e(\bm{z})d\bm{z} \]
    where $\bm{x}$ is the ICUE formed from $\bm{d}$. 
    \end{prop}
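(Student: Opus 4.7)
The plan is to leverage the fact, noted at the end of Section \ref{sec:prelim}, that the underlying nonatomic congestion game $\mathcal{M}$ is already an exact potential game, with the Beckmann potential $\Psi(\bm{x}) := \sum_{e\in E} \int_0^{f_e(\bm{x})} c_e(z)\,dz$ on the space of flows. For any profile $\bm{d} = (\bm{d}^r)_{r \in R}$ chosen by the operating systems, the resulting ICUE $\bm{x}(\bm{d})$ is the minimizer of the convex program $\min \Psi(\bm{x})$ subject to the demand constraints $\sum_{s_k \in S_k} x^k_{s_k} = d_k$ and $x^k_{s_k} \geq 0$ for every knowledge type $k$. Hence the quantity $\Phi(\bm{d}) = \Psi(\bm{x}(\bm{d}))$ in the statement is precisely the optimal value of this parametric convex program.

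First, I would observe that the aggregate edge flows $f_e(\bm{x}(\bm{d}))$, and therefore $\Phi(\bm{d})$, depend only on $\bm{d}$ (a standard uniqueness fact for nonatomic ICUEs, following from convexity of $\Psi$ in the aggregate flows). Second, I would apply the envelope theorem to the parametric minimization: the partial derivative $\partial \Phi/\partial d_k$ equals the optimal Lagrange multiplier $\lambda_k^\star$ of the constraint $\sum_{s_k} x^k_{s_k} = d_k$, and the ICUE conditions, which coincide with the KKT conditions, identify this multiplier with the common equilibrium cost $C_k(\bm{x}(\bm{d}))$ of type-$k$ players. Third, for any unilateral deviation $\bm{d}^r \to \bm{d}'^r$ of operating system $r$, I would integrate this gradient along the straight-line path $\bm{d}^r(t) = (1-t)\bm{d}^r + t\bm{d}'^r$ in its strategy space (keeping $\bm{d}^{-r}$ fixed), obtaining
\[
\Phi(\bm{d}'^r,\bm{d}^{-r}) - \Phi(\bm{d}^r,\bm{d}^{-r}) = \int_0^1 \sum_{k\in\kappa_r} C_k(\bm{x}(\bm{d}(t)))\,(d'^r_k - d^r_k)\,dt.
\]
The analogous expansion for $C_r(\bm{d}) = \sum_{k \in \kappa_r} C_k(\bm{x}(\bm{d}))\,d_k^r$ must then be shown to match this.

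The principal obstacle is this last matching. Differentiating $C_r(\bm{d}(t))$ along the path produces a \emph{direct} term $\sum_k C_k(\bm{x}(\bm{d}(t)))(d'^r_k - d^r_k)$, which recovers the $\Phi$-integrand, plus an \emph{indirect} term capturing how the equilibrium costs $C_{k'}(\bm{x}(\bm{d}(t)))$ drift with $t$, weighted by the running demands $d^r_{k'}(t)$. The exact-potential identity requires this indirect contribution to vanish. My strategy is to exploit that at each $t$ the flow $\bm{x}(\bm{d}(t))$ is a critical point of $\Psi$ tangentially to the feasibility polytope: flow perturbations compatible with the evolving constraints are tangent to the active set, so by first-order optimality their contribution to $\Psi$ collapses into the envelope term already accounted for, while only the demand-side perturbations survive. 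Rendering this cancellation rigorous, rather than the envelope step or the KKT identification, is where the real content of the proof lies.
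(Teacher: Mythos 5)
Your route is genuinely different from the paper's: the paper telescopes the difference of Beckmann integrals over the edges touched by $r$'s deviation and identifies the result directly with $C_r(\hat{\bm{d}}^r,\bm{d}^{-r})-C_r(\bm{d})$, whereas you treat $\Phi$ as the optimal value of a parametric convex program, compute $\partial\Phi/\partial d_k=\lambda_k^\star=C_k(\bm{x}(\bm{d}))$ by the envelope theorem, and integrate the gradient along a path. Your first two steps are correct, and more careful than the paper's. The problem is precisely the step you flag as the ``real content'': the indirect term does not vanish, and no appeal to first-order optimality of $\Psi$ will make it vanish. First-order optimality annihilates perturbations of the flow $\bm{x}$ at \emph{fixed} demands; here the demands themselves move, and the drift of the equilibrium costs $C_{k'}$ enters $C_r=\sum_{k'}C_{k'}d^r_{k'}$ at first order, because $C_r$ weights those costs by the order-one demands $d^r_{k'}$.

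Concretely, exactness would require $\partial C_r/\partial d^r_k=\partial\Phi/\partial d^r_k=C_k$, but
\[
\frac{\partial C_r}{\partial d^r_k}=C_k+\sum_{k'\in\kappa_r}d^r_{k'}\,\frac{\partial C_{k'}}{\partial d^r_k},
\]
and the second sum is generically nonzero. On the Pigou network with two operating systems and $p=1$, writing $x_r$ for the flow $r$ sends to the congestible edge, one has $\Phi=\tfrac12(x_1+x_2)^2+1-x_1-x_2$, so $\partial\Phi/\partial x_1=(x_1+x_2)-1$, while $C_1=x_1(x_1+x_2)+d^1-x_1$ gives $\partial C_1/\partial x_1=(x_1+x_2)+x_1-1$; the discrepancy $x_1$ is exactly your indirect term. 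Its non-vanishing is visible in Example~\ref{ex:pigou2}: the interior equilibrium $x_1=x_2=\nicefrac{1}{3}$ is not a critical point of $\Phi$ in either coordinate, which an exact potential would force. So your straight-line integration yields $\Phi(\bm{d}'^r,\bm{d}^{-r})-\Phi(\bm{d})=\int_0^1\sum_{k}C_k\,\dot d^r_k\,dt$, a path integral of marginal costs, not the change in the demand-weighted objective $C_r$. (The paper's own derivation passes from the unweighted sum $\sum_{k\in\kappa_r}[C_k(\hat{\bm{x}})-C_k(\bm{x})]$ to the demand-weighted $C_r(\hat{\bm{d}}^r,\bm{d}^{-r})-C_r(\bm{d})$ without justification; your calculation makes that obstruction explicit rather than resolving it.) What your argument does establish cleanly is that $\Phi$ is an exact potential for the inner ICUE layer; for the operating-system layer it would be a potential only if each $C_r$ were replaced by an objective whose gradient in $d^r_k$ is $C_k$.
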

    \begin{proof}
		Consider a unilateral deviation $\hat{\bm{d}}^r$ of operating system $r$ from an outcome $\bm{d}$ with respective ICUE profiles $\hat{\bm{x}}$ and $\bm{x}$.
		\begin{align*}
		\Phi(\hat{\bm{d}}^r,\bm{d}^{-r}) - \Phi(\bm{d}) 
		= \sum_{e \in E} \int_0^{f_e(\hat{\bm{x}})} & c_e(\bm{z})d\bm{z} - \\
		& \sum_{e \in E}  \int_0^{f_e(\bm{x})} c_e(\bm{z})d\bm{z} 
		\end{align*}
		Since we the deviation from $\bm{x}$ to $\bm{\tilde{x}}$ only involves edges in $\kappa_r$, we rewrite as
		\begin{align*}
		&=  \sum_{k \in \kappa_r} \sum_{e \in s_k} \Big[ \int_0^{f_e(\hat{\bm{x}})} c_e(\bm{z})d\bm{z}   - \int_0^{f_e(\bm{x})} c_e(\bm{z})d\bm{z} \Big] \\ 
		&=  \sum_{k \in \kappa_r}\Big[ C_k(\hat{\bm{x}}) - C_k(\bm{x}) \Big] \\
		&= C_r(\hat{\bm{d}}^r,\bm{d}^{-r}) - C_r(\bm{d})
		\end{align*}
		Thus, the function $\Phi$ is an exact potential function. By definition, the network control game is an exact potential game.
	\end{proof}
    Since we have an exact potential game with non-decreasing edge-costs, Corollary \ref{cor:essentiallyunique} follows directly from \cite[Theorem 1]{Acemoglu}. 
    \begin{cor} \label{cor:essentiallyunique}
    Each network control game has an essentially unique Nash equilibrium. 
    \end{cor}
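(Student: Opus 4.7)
The plan is to deduce the statement directly from \cite[Theorem 1]{Acemoglu}, whose hypotheses require an exact potential game whose potential is a sum of integrals of non-decreasing edge-cost functions. Proposition~1 has already shown that the network control game is exact potential with the Rosenthal-style potential $\Phi(\bm{d}) = \sum_{e \in E} \int_0^{f_e(\bm{x})} c_e(\bm{z})\, d\bm{z}$, where $\bm{x}$ is the ICUE induced by the operating systems' demand assignment $\bm{d}$. In any exact potential game, pure-strategy Nash equilibria coincide with the minimisers of $\Phi$ over the joint strategy space, so essential uniqueness reduces to a statement about minimisers of $\Phi$.

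Next I would verify convexity of $\Phi$. Each $c_e$ is non-decreasing (the standing congestion-game assumption), so the map $z \mapsto \int_0^z c_e(y)\, dy$ is convex. The induced edge-flow $f_e(\bm{x})$ is an affine function of $\bm{d}$, since it is the sum of the type-demands routed through $e$, so each summand of $\Phi$ is convex in $\bm{d}$, and hence $\Phi$ itself is convex on the polytope $D_\kappa$. It follows that the set of Nash equilibria is a non-empty convex set on which $\Phi$ is constant. Along any line segment joining two minimisers the convexity of each edge-integral forces the induced flow $f_e(\bm{x})$ to be constant at every $e$: otherwise the strict convexity in that direction would yield a strictly lower value of $\Phi$, contradicting optimality. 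Consequently, all equilibria share the same edge-flow profile and therefore the same operator costs $C_r(\bm{d})$.

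The main conceptual subtlety, and the reason the uniqueness is only \emph{essential}, is the indirect link between the operating-system strategies $\bm{d}$ and the realised flows: $\Phi$ depends on $\bm{d}$ only through $\bm{x}$, so distinct demand assignments can a priori induce the same ICUE, and distinct ICUEs can give the same aggregate edge flows on shared resources. Uniqueness therefore lives at the level of edge flows and incurred costs, not at the level of strategy profiles, which is exactly the notion formalised in \cite[Theorem 1]{Acemoglu}. Once convexity and the exact-potential property are in place, the corollary is a direct invocation of that theorem.
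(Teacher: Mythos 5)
Your proposal takes essentially the same route as the paper: the paper's entire argument is the observation that, once Proposition~1 establishes the exact potential structure with non-decreasing edge costs, the corollary follows directly from \cite[Theorem 1]{Acemoglu}, and your convexity-of-$\Phi$ discussion is a correct elaboration of why that theorem applies. The only caveat is that your opening claim that Nash equilibria \emph{coincide} with minimisers of $\Phi$ in \emph{any} exact potential game is too strong in general (only minimisers are automatically equilibria); it holds here because $\Phi$ is convex and differentiable on the convex polytope $D_\kappa$, which you do establish.
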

    
    As the network control game is an exact potential game, we know that all of the results that hold for congestion games will also be true here, e.g. \cite{Roughgarden2003,Milchtaich}. Nonetheless, these games will provide an insight into how the distribution of vehicle operating systems will affect traffic equilibria, a novel contribution to the literature. 
    
    We now define the PoA of a network control game as 
    \[ PoA = \frac{\max_{\bm{d} \in NE} C(\bm{d})}{ \min_{\bm{d} \in D_\kappa} C(\bm{d})  }  \]
    where $NE$ is the set of Nash equilibria. Since there is a one-to-one mapping of flow to operating systems, the social cost is the same as the underlying congestion game. 
    
    Note that our setup can be extended to incorporate vehicles that are not fully controlled by an operating system, e.g., by allowing operating systems that give full information sets to their populations. However, we only consider vehicles following an operating system directly, to more easily classify the best and worst-case equilibria from full route control of populations. We also note that, for any strategy distribution in a (information constrained) nonatomic congestion game, we can, without loss of generality, only consider pure strategy equivalents. Thus, we can consider the case where all information sets chosen by the operating systems contain only one strategy. As such, the profile set by the operating systems $\bm{d}$ has a deterministic associated ICUE $\bm{x}$.
    
\section{Inefficiency of Multiple Route Controllers} \label{sec:ineff}
    To see how the network control game creates inefficiency, first consider what happens as we change the number of operating systems in a proportional game. If an operating system has full control of the game, then all vehicles follow the same operating system. Thus, the operating system has an objective function equal to the social cost of the system: $C_r(\bm{d})=\sum_{k \in \kappa_r} C_{k}(\bm{x})d_k=\sum_{r \in R} \sum_{k \in \bm{K}_r} C_{k} (\bm{x})d_{k} = SC(\bm{x})$. As such, the case with $|R|=1$ will implement the socially optimal routing allocation.
    \begin{figure}[b!]
    \centering
    \includegraphics[width=0.75\linewidth,trim={0.1cm 0 0.2cm 1.2cm}, clip]{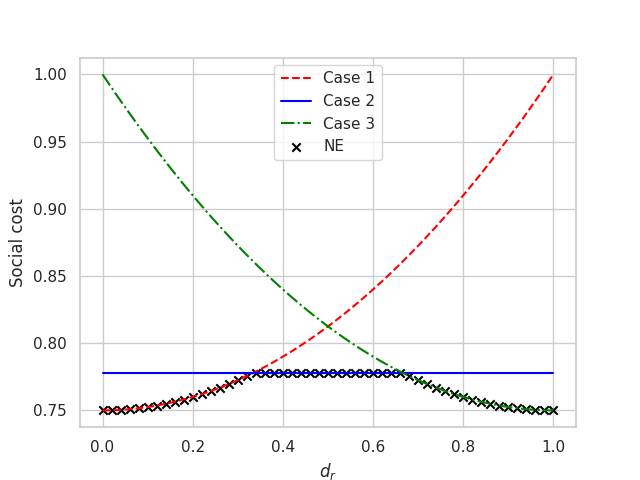}
    \caption{The social cost of routing on the Pigou with 2 operating systems and $p=1$.}
    \label{fig:pigou_sc}
    \end{figure}
    
    Now, as we increase the number of operating systems, the demand of the population controlled by a single player decreases. As $|R| \rightarrow \infty$, since the game is proportional, we have that $d_{N_r} \rightarrow 0$, $\forall r \in R$. With an infinite number of agents controlling a negligible amount of flow, we are now back to a simple nonatomic congestion game. This occurs since $C_{-r}(\bm{d}^r,\bm{d}^{-r})=C_{-r}(\bm{d}^{-r})$ $\forall \bm{d}^r \in D_\kappa$, when the proportional of control of $r$ is negligible. The Price of Anarchy of the game is now the same as its underlying nonatomic congestion game. Thus, by increasing the number of operating systems controlling the flow in a proportional network control game, there is an inefficient equilibrium if the nonatomic congestion game admits one.
    
    As the Price of Anarchy is independent of network topology \cite{Roughgarden2003}, we can use the Pigou example to illustrate the inefficiency of having multiple operating systems. We assume polynomial cost functions with degree $p$. To begin, let us consider the following examples with linear cost functions, i.e., $p=1$.

    \begin{example}[Two Operating Systems]\label{ex:pigou2}
    Suppose we have a total flow of 1 and two operating systems 1 and 2, with population control of $d^1$ and $d^2 = 1-d^1$ respectively, on a Pigou network. 
    Each operating system must solve the following minimisation problem to find their equilibrium routing defined by the variable $x_r$ for $r \in \{ 1,2\}$ as defined in Figure \ref{fig:pigou_flows}. 
    \[  \min_{x_r} x_r(x_1+x_2) +  (d^r-x_r) \]
    subject to $0 \leq x_r \leq d^r$. This gives us the Lagrangian function (where $s \in \{1,2\}$, $s \neq r$):
    \[ L(x_r,\lambda_1,\lambda_2)=x_r(x_r+x_s) + d^r-x_r-\lambda(d^r-x_r) \]
    The Karush-Kuhn-Tucker conditions are:
    \begin{align*}
        \frac{\delta L}{\delta x_r}= 2x_r+x_s-1+\lambda=0 & \quad \lambda(x_r-d^r) = 0
    \end{align*}
      \begin{figure}[!ht]
    \centering
        \includegraphics[width=0.8\linewidth,trim={0.1cm 0 0.2cm 0.9cm}, clip]{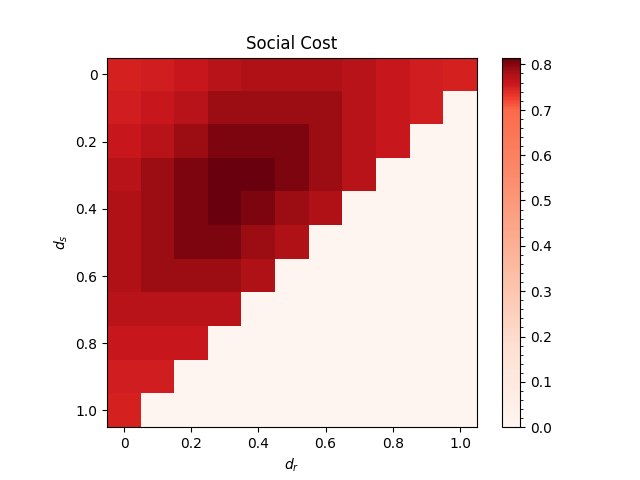}
        \caption{The social cost of routing on the Pigou network with $p=1$ for three operating systems.}
        \label{fig:pigou_3}
    \end{figure}

    First, consider the case where $x_r=d^r$. Since $\lambda \geq 0$, we must have $d^r \leq \frac{(1-x_s)}{2}$. Operating system $r$ plays selfishly by routing along the bottom edge only if their control is small. Now suppose that $x_r \neq d^r$ and $x_s \neq d^s$. The solution here is $x_1=x_2=\nicefrac{1}{3}$. The last possible case is where $x_s=d^s$, and similarly this occurs when $d^s \leq \frac{(1-x_r)}{2}$. The optimal routing of splitting the vehicles equally between routes only occurs when there is one operating system with full control. The social cost of equilibria is shown in Figure \ref{fig:pigou_sc}.
 \end{example}
  As choices are independent, similar reasoning applies when there are more operating systems.
    
 \begin{example}[Three Operating Systems] \label{ex:pigou3} 
    Now suppose three operating systems control the flow on the same Pigou network. As before, each operating system $r \in \{1,2,3\}$ performs a minimisation over their routing choice $x_r$. As choices are independent, similar reasoning applies with more populations. The optimal routing remains the same, but the effect of adding another selfish agent increases the worst possible cost. 
    This can be seen in Figure \ref{fig:pigou_3}, where same behaviour is similar to when $|R|=2$, but with another dimension. 
    \begin{figure*}[ht!]
    \centering
        \includegraphics[width=0.40\linewidth,trim={0.6cm 0 0.2cm 0.2cm}, clip]{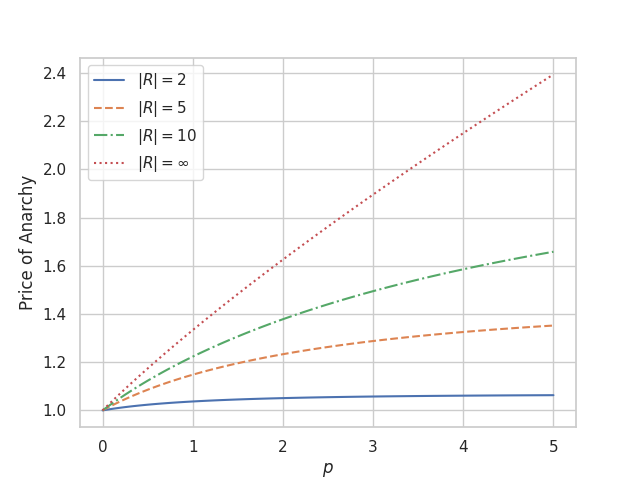}
        \includegraphics[width=0.40\linewidth,trim={0.2cm 0 0.6cm 0.1cm}, clip]{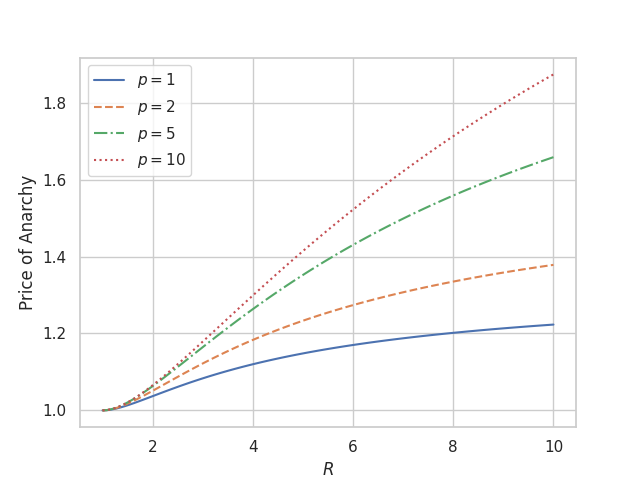}
        \caption{The Price of Anarchy for a network control game for various $p$ and $|R|$. }
        \label{fig:poa_pigou}
    \end{figure*}
       \begin{figure*}[ht!]
  	\centering
  	\includegraphics[width=0.42\linewidth]{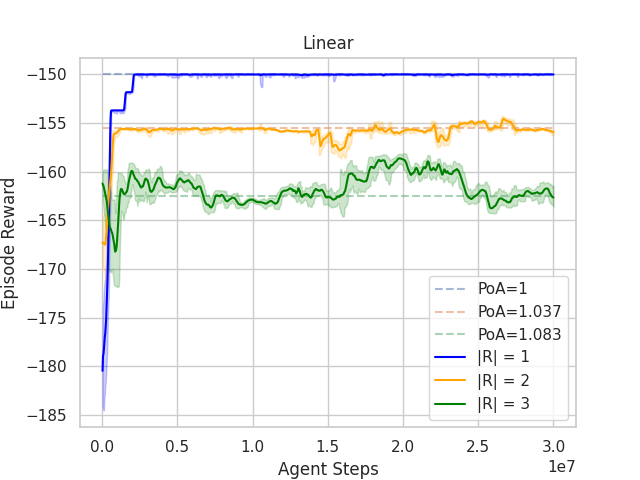}
  	\includegraphics[width=0.42\linewidth]{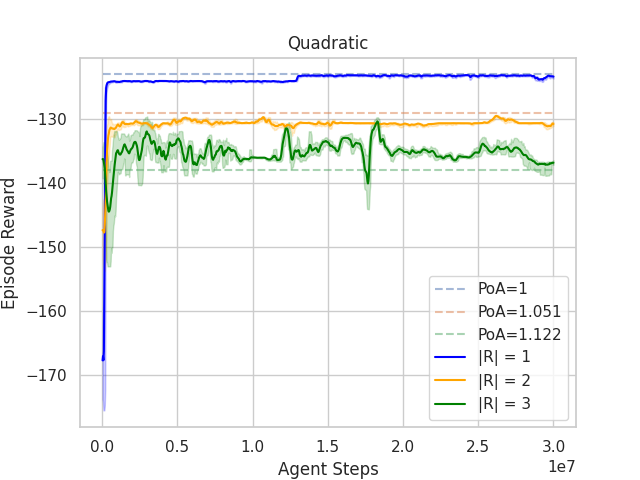}
  	\caption{The learning curves for A3C agents playing a network control game on Braess' example for linear and quadratic edge costs.}
  	\label{fig:rl_poa}
  \end{figure*}

    \end{example}
    Since the PoA is network independent, the worst-case example of it is found using the Pigou example. 
    Thus, we can find the PoA using the same method for general $|R|>0$.
    
    \begin{prop} \label{prop:poa_worstcase}
    The Price of Anarchy of a network control game is highest when the game is proportional. 
    \end{prop}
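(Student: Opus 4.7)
The plan is to invoke network-independence of the Price of Anarchy \cite{Roughgarden2003} to reduce the claim to the Pigou network with polynomial cost of degree $p$, as in the preceding examples. Let $x_r \in [0, d^r]$ denote operating system $r$'s flow on the bottom (congestible) edge and set $F := \sum_{s \in R} x_s$. The social cost of the induced routing is $SC(F) = (1 - F) \cdot 1 + F \cdot F^p = 1 - F + F^{p+1}$, which is strictly increasing for $F > F^* := (p+1)^{-1/p}$. Because the social optimum is the same across all allocations of a fixed total demand, it suffices to show that every equilibrium satisfies $F \geq F^*$ and that the equilibrium $F$ is maximised under proportional control.

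The equilibrium is pinned down by the KKT conditions of each operating system's convex programme. Interior choices $x_r \in (0, d^r)$ satisfy $F^p + p\, x_r F^{p-1} = 1$, forcing a common interior value $\bar{x} := (1 - F^p)/(p F^{p-1})$; the upper corner $x_r = d^r$ requires $d^r \leq \bar{x}$; and $x_r = 0$ is infeasible since the corresponding KKT condition $F^p \geq 1$ contradicts $F \leq 1 - d^r < 1$. Writing $R^c$ for the corner set, $m := |R^c|$, and $D^c := \sum_{r \in R^c} d^r$, the corner bound $D^c \leq m \bar{x}$ combined with the interior-sum identity $(|R|-m)\bar{x} = F - D^c$ and the expression for $\bar{x}$ yields, after elementary algebra,
\[F^p \;\leq\; \frac{|R|}{p + |R|} \;=:\; F_{|R|}^p,\]
with equality attained by the proportional allocation $d^r = 1/|R|$, since there $\bar{x} = F_{|R|}/|R| < 1/|R|$ keeps every operator strictly interior, so $m = 0$ and $D^c = 0$.

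An analogous manipulation shows $F \geq F^*$ at every equilibrium (the all-corner configuration $m = |R|$ is infeasible, as it would force $F = 1$ and hence $\bar{x} = 0$, contradicting $d^r \leq \bar{x}$ for any $r$ with $d^r > 0$). Thus every equilibrium flow lies in $[F^*, F_{|R|}]$, and monotonicity of $SC$ on this interval gives $SC(F) \leq SC(F_{|R|})$, with equality at the proportional allocation. Dividing by the common optimal social cost then yields the proposition. The main obstacle I expect is the algebraic consolidation above: combining the corner bound $D^c \leq m \bar{x}$ with the interior fixed-point equation to extract the clean, allocation-free inequality $F^p \leq |R|/(p+|R|)$. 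Once that is in hand, the rest reduces to routine monotonicity.
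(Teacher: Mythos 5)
Your proof is correct and, in its concluding step, more complete than the paper's own argument, though it shares the same skeleton: both reduce to the Pigou network via topology-independence of the PoA (an assumption the paper also makes without separate justification for network control games) and both analyse each operating system's programme through its KKT conditions, arriving at the same interior value $\bar{x}=(p|R|^{p-1}+|R|^p)^{-1/p}$ at the symmetric equilibrium. Where you diverge is in how the worst case is certified. The paper characterises the set of allocations for which \emph{no} operating system routes socially optimally --- each $d^r$ must exceed the threshold $(p|R|^{p-1}+|R|^p)^{-1/p}$ --- and observes that the proportional allocation lies in this set; this yields the explicit description of all worst-case allocations used later in Section \ref{sec:osg}, but it does not actually verify that allocations outside this set have strictly lower equilibrium cost. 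Your route closes that gap: the corner bound $D^c\leq m\bar{x}$ together with the interior-sum identity gives the allocation-free inequality $F^p\leq |R|/(p+|R|)$ on the total equilibrium flow, with equality at the proportional allocation, and monotonicity of $SC(F)=1-F+F^{p+1}$ on $[F^*,F_{|R|}]$ then shows every allocation is weakly dominated by the proportional one. This is the cleaner and more rigorous argument; what it gives up relative to the paper is the explicit threshold characterisation of the full worst-case region. (Incidentally, your derivation implies $(p|R|^{p-1}+|R|^p)^{-1/p}\leq \frac{1}{|R|}$, which is the correct direction; the inequality as printed in the paper's proof has the sign reversed, though its conclusion is unaffected.)
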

    \begin{proof}
    To find the worst-case of route control, we want that no operating system is acting socially optimally. We can find the worst-case of routing on the Pigou example since this is independent of topology. Thus, we solve the minimisation
         \[  \min_{0 \leq x_r \leq d_r} x_r(\sum_{s \in R}x_s)^p +  (d^r-x_r) \]
   To do so, we use the Lagrangian function
		 \[ L(x_r,\lambda_1,\lambda_2)=x_r(\sum_{s \in R}x_s)^p + d^r-x_r-\lambda(d^r-x_r) \]
    and corresponding Karush-Kuhn-Tucker conditions:
		 \begin{align*}
		     \frac{\delta L}{\delta x_r}= (\sum_{s \in R}x_s)^p+px_r(\sum_{s \in R}x_s)^{p-1}-1+\lambda=0 \\
		     \quad  \lambda \frac{\delta L}{\delta \lambda}= \lambda(x_r-d^r) = 0 
		 \end{align*}
		
    For general $p\geq 0$ and $|R|$, the three cases remain the same as Example \ref{ex:pigou2}. The best response to $x_r=d^r$ is to choose $x_s= (1+p)^{\nicefrac{-1}{p}}$, and when $x_r=x_s$, we have $x_r=(p|R|^{p-1}+|R|^p)^{\nicefrac{-1}{p}}$.
    For no operating system to choose the socially optimal routing in Pigou's example, each operating system must have proportional control of population $i$ at least $(p|R|^{p-1}+|R|^p)^{-1/p}$ and less than or equal to $1-(p|R|^{p-1}+|R|^p)^{-1/p}$. For all $|R|$ and $p$, $(p|R|^{p-1}+|R|^p)^{-1/p} \geq \frac{1}{|R|}$. As $|R| \rightarrow \infty$, $(p|R|^{p-1}+|R|^p)^{-1/p} \rightarrow \frac{1}{|R|}$.  Thus, the worst-case equilibrium cost can be achieved through a proportional assignment of populations.
    \end{proof}
    
    The maximum social cost of Nash equilibria of the network control game also occurs for other distributions of operating system control. From Figure \ref{fig:pigou_sc}, we see that the range of assigned population control that would maximise social cost, is those that are almost proportional. This set is characterised by each operating system having a share of control of at least $(|R|^p+p|R|^{p-1})^{-1/p}$ for each population. For example, with linear cost functions and two operating systems, each operating system must control at least $1/3$ of each population or for three operating systems they must control $1/4$. 
    
    We will now find the worst-case Price of Anarchy for a network control game for polynomial edge-cost functions. 
    \begin{theorem} \label{theorem:poa}
    The Price of Anarchy for a network control game with $|R|$ operating systems and polynomial edge-cost functions at most degree $p$ is bounded by
        
    \[\frac{\mbox{\scalebox{0.95}{$1-|R| \left(|R|^{p-1} (p+|R|)\right)^{-\frac{1}{p}}+|R|^{p+1} \left(|R|^{p-1} (p+|R|)\right)^{-\frac{p+1}{p}}$}}}
     {\mbox{\scalebox{0.98}{$1-(p+1)^{-\frac{1}{p}}+(p+1)^{-\frac{p+1}{p}}$}}}\]
    \end{theorem}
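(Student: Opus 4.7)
The plan is to exploit topology-independence together with Proposition~\ref{prop:poa_worstcase} so that the entire calculation reduces to the two-edge Pigou instance of Figure~\ref{fig:pigou_flows} at a proportional assignment. By the result of \cite{Roughgarden2003}, the worst-case ratio between an equilibrium routing cost and the social optimum over polynomial latencies of degree at most $p$ is attained on a Pigou-type parallel network; since the social cost of a network control game coincides with that of its induced ICUE, and Proposition~\ref{prop:poa_worstcase} already localises the worst assignment to the proportional one, it suffices to compute the NE and SO social costs on the Pigou instance with $|R|$ symmetric controllers and take their ratio.

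First I would read off the worst-case symmetric Nash equilibrium directly from the KKT analysis in the proof of Proposition~\ref{prop:poa_worstcase}: the interior solution gives
\[ x_r^{\mathrm{NE}} \;=\; \bigl(p|R|^{p-1}+|R|^p\bigr)^{-1/p} \;=\; \bigl(|R|^{p-1}(p+|R|)\bigr)^{-1/p} \]
for every operating system $r$. Summing over $r$, the total bottom-edge flow is $y^{\mathrm{NE}} = |R|\bigl(|R|^{p-1}(p+|R|)\bigr)^{-1/p}$, and the residual $1-y^{\mathrm{NE}}$ of flow uses the unit-cost top edge. Substituting into the Pigou social cost $SC(y)=(1-y)+y\cdot y^p = 1 - y + y^{p+1}$ yields
\[ 1 \;-\; |R|\bigl(|R|^{p-1}(p+|R|)\bigr)^{-1/p} \;+\; |R|^{p+1}\bigl(|R|^{p-1}(p+|R|)\bigr)^{-(p+1)/p}, \]
which is exactly the numerator of the claimed bound.

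Next I would compute the social optimum on the same Pigou instance by minimising $SC(y)=1-y+y^{p+1}$ over $y\in[0,1]$: the first-order condition $-1+(p+1)y^p=0$ gives $y^\star=(p+1)^{-1/p}$, and substituting back yields $1-(p+1)^{-1/p}+(p+1)^{-(p+1)/p}$, which is precisely the denominator. Dividing the two expressions then produces the formula asserted in Theorem~\ref{theorem:poa}.

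The main obstacle is rigorously justifying the reduction to Pigou across all network topologies, since Proposition~\ref{prop:poa_worstcase} only handles the assignment dimension and strictly speaking the topology-independence of \cite{Roughgarden2003} is stated for the standard nonatomic routing model rather than for network control games. I would close this gap with a Roughgarden-style $(\lambda,\mu)$-smoothness argument adapted to operators: on each edge with latency of polynomial degree at most $p$, the per-edge Pigou bound dominates the deviation term that appears in the first-order NE condition of each operating system, and summing over edges while invoking the exact potential structure of network control games bounds the overall PoA by the Pigou PoA, which the two computations above evaluate explicitly.
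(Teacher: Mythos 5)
Your proposal follows essentially the same route as the paper's proof: it reduces to the proportional Pigou instance via Proposition~\ref{prop:poa_worstcase} and topology-independence, solves the symmetric first-order condition to obtain $x_r=(p|R|^{p-1}+|R|^p)^{-1/p}$, and takes the ratio of the resulting equilibrium cost to the Pigou social optimum $1-(p+1)^{-1/p}+(p+1)^{-(p+1)/p}$, exactly matching the paper's computation. The only difference is that you explicitly flag the reduction-to-Pigou step across arbitrary topologies as needing a smoothness-style justification, whereas the paper simply asserts it by citing Roughgarden's topology-independence result; this is a reasonable extra caution but not a different method.
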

    
    \begin{proof}
    By Proposition \ref{prop:poa_worstcase}, the worst-case equilibrium can be found when the game is proportional. Thus, we let each operating system solve the objective function
        \[ \min_{x_r} x_r \big(\sum_{s \in R} x_s\big)^p + d^r - x_r \]
    At the minimum, we have
        \[ \big(\sum_{s \in R} x_s\big)^p + x_r p\big(\sum_{s \in R} x_s\big)^{p-1} - 1 = 0.\]
    Since the strategy spaces are symmetric and the game has an exact potential function, there exists a Nash equilibrium where each operating system plays the same strategy. The Nash equilibria of an exact potential game all have the same social cost so this instance is also the worst Nash equilibrium. Thus, 
        \[ (|R|x_r)^p + x_r p(|R|x_r)^{p-1} - 1 = 0.\]
    Which rearranges to
        \[  x_r =  (p|R|^{p-1}+|R|^p)^{-1/p} . \] 
    The social cost of this worst-case Nash equilibrium is
        \[ |R|^{p+1}(p|R|^{p-1}+|R|^p)^{-1-1/p}+1-|R|(p|R|^{p-1}+|R|^p)^{-1/p}.  \]
    The social optimum of the game is where the total congestion on the bottom edge is $(p+1)^{-1/p}$ with a social cost of \[(p+1)^{-1-1/p}+1-(p+1)^{-1/p}.\] 
    This ratio of these two costs gives us the result.
    \end{proof}
    
    For $|R|=1$, the Price of Anarchy is 1. Thus, the system is efficient when an operating system has full control of all vehicles. As $|R|\rightarrow \infty$, the PoA tends to that of the nonatomic congestion game it controls \cite{Roughgarden2003} 
    \[ \frac{(p+1)^{\frac{1}{p}+1}}{(p+1)^{\frac{1}{p}+1}-p}.    \]
    Figure \ref{fig:poa_pigou} plots the Price of Anarchy as a function of $p$ for the network control games with varying $|R|$ and $p$.
    The Price of Anarchy for the network control game is significantly better than that of the congestion game (where $|R|=\infty$) for a small number of operating systems $\forall p>0$. But as the number of operating systems increases, the system gets more inefficient. 
    
\section{Multi-Agent Learning Example}
 Consider an instance of the network control problem on the Braess network shown in Figure \ref{fig:braess_p}. 
 	  \begin{figure} [hb]
	 	\begin{center}
	 		\begin{tikzpicture}[shorten >=2pt, thick]
	 		\node[circle,draw] (A) at (0,0) {$O$};
	 		\node[circle,draw] (B) at (1.5,1) {};
	 		\node[circle,draw] (C) at (1.5,-1) {};
	 		\node[circle,draw] (D) at (3,0) {$D$};
	 		\draw[-] (A) to[bend left=20] node[above] {$x^p$} (B);
	 		\draw[-] (B) to[bend right=20] node[above] {} (A);
	 		\draw[-] (B) to[bend left=20] node[above] {\small $1$} (D);
	 		\draw[-] (D) to[bend right=20] node[above] {} (B);
	 		\draw[-] (A) to[bend right=20] node[below] {\small $1$} (C);
	 		\draw[-] (C) to[bend left=20] node[below] {} (A);
	 		\draw[-] (C) to[bend right=20] node[below] {$x^p$} (D);
	 		\draw[-] (D) to[bend left=20] node[below] {} (C);
	 		\draw[-] (B) to node[right] {0} (C);
	 		\draw[-] (C) to node[right] {} (B);   
	 		\end{tikzpicture}
	 		\caption{The Braess example where $d=1$. When $p=1$, the cost functions are linear, and for $p=2$, the cost functions are quadratic.}
	 		\label{fig:braess_p}
	 	\end{center}
	 \end{figure}
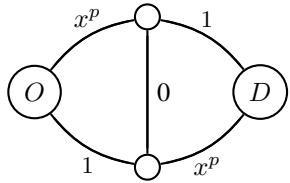

  To show that Theorem \ref{theorem:poa} aligns with MARL, we simulated an instance of the network control game on this example for linear and quadratic edge-cost functions. We chose a proportional game, since this case has worst-case selfish-routing as indicated by Proposition \ref{prop:poa_worstcase}. 
  
  We used the Asynchoronous Advantage Actor-Critic (A3C) algorithm \cite{Mnih2016}, with either one, two, or three operating system agents controlling the flow. Each game consisted of playing the network control game for 100 repeated rounds. Thus, the social optimum cost is 150 or 123 for linear and quadratic costs respectively and the worst possible cost is 200.  Each instance was averaged over three different random seeds. The neural network consisted of two fully connected layers of size 32 and a Long Short Term Memory (LSTM) recurrent layer \cite{Gers1999}. We used the Ray library (\url{https://github.com/ray-project/ray}) for a standard implementation of A3C with a batch size of 30000. 
 
 The learning curves for these experiments are shown in Figure \ref{fig:rl_poa}. The results indicate that the agents learn to play strategies with a total cost that is close to the predicted Price of Anarchy for the edge-cost type and number of agents. Thus, reinforcement learning agents are vulnerable to choosing suboptimal routing as predicted by the theory. Application of RL to route control therefore requires cooperation between operating systems to maximise congestion mitigation, for example using intrinsic motivation such as \cite{Jaques2019,Roman2020}. 
    
\section{Choosing Operating Systems} \label{sec:osg}
    So far we have studied vehicles that are assigned to operating systems controlling their choices. Here, we allow them to strategically select their operating system beforehand. In this extension, Nash equilibrium outcomes are such that no vehicle has an incentive to unilaterally deviate from the operating system they selected, given the prescribed route choices. 
    
    
   We define an \textit{operating systems game} to be a tuple $(\mathcal{M}, R)$ where $\mathcal{M}$ is a nonatomic congestion game, and $R$ is the set of operating systems. Furthermore, the strategy space of players in $\mathcal{M}$ is $R$, since their routing is selected by the operating system they choose.
   Let $y^i_r$ indicate the share of control of $r \in R$ selected by population $i \in N$. Then a strategy profile $\bm{y}=(\bm{y}^i)_{i \in N}$ is feasible if $\forall i \in N$, $\sum_{r \in R} y^i_r = d_i$. Each feasible $\bm{y}$ has a corresponding network control game where $\forall r \in R$ and $i \in N$, $y^i_r = d^r_i$ and $i \in N_r$ if $y_i^r>0$. Thus, each $\bm{y}$ has an essentially unique Nash equilibria $\bm{d}$ deciding the distribution of information.
   Define the cost function of a passenger $i \in N$ to be 
   \[C_i(\bm{y}):= \sum_{r \in R} y^i_r \sum_{k \in \kappa_r} C_k(\bm{x})d^r_k \mathds{1}_{k \in K_i}\]
   where $\bm{x}$ is the ICUE resulting from $\bm{d}$. Moreover, a Nash equilibrium is $\bm{y}$ such that  $\forall i \in N$ $C_i(\bm{y}) \leq C_i(y', \bm{y})$ $\forall y' \in R$. 
   
   \begin{prop}
    An operating systems game is an exact potential game for potential $\Phi$ defined as 
    \[ \Phi(\bm{y}) := \sum_{e \in E} \int_0^{f_e(\bm{x})} c_e(\bm{z})d\bm{z} \]
    where $\bm{x}$ is the ICUE formed from $\bm{d}$ and $\bm{y}$.
   \end{prop}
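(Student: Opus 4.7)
The plan is to mirror the proof of the previous proposition, which established the same function $\Phi$ as an exact potential for the inner network control game. I would consider a unilateral deviation by a population $i \in N$ from a profile $\bm{y}$ to $(\hat{\bm{y}}^i, \bm{y}^{-i})$, and let $\bm{d}, \bm{x}$ (resp.\ $\hat{\bm{d}}, \hat{\bm{x}}$) denote the essentially unique inner Nash equilibrium and induced ICUE before (resp.\ after) the deviation; existence and uniqueness are guaranteed by Corollary \ref{cor:essentiallyunique}. The target identity is $\Phi(\hat{\bm{y}}^i, \bm{y}^{-i}) - \Phi(\bm{y}) = C_i(\hat{\bm{y}}^i, \bm{y}^{-i}) - C_i(\bm{y})$.

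The key intermediate step is to control the indirect effect that the outer deviation has on the inner NE. Because $\Phi$ is an exact potential for the network control game and is convex in $\bm{d}$ under non-decreasing edge costs, both $\bm{d}$ and $\hat{\bm{d}}$ minimise $\Phi$ over the feasibility polytope determined by the corresponding outer profile. An envelope argument then ensures that the shift $\bm{d} \to \hat{\bm{d}}$ contributes only through the constraints that were relaxed or tightened, namely those of the form $\sum_{k \in \kappa_r} d_k \mathds{1}_{K_i}(k) = y^i_r$ for the two operating systems whose population-$i$ assignment changed; all other directional adjustments of $\bm{d}$ leave $\Phi$ invariant to first order.

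After this reduction, writing $\Phi(\hat{\bm{y}}) - \Phi(\bm{y}) = \sum_{e \in E} \int_{f_e(\bm{x})}^{f_e(\hat{\bm{x}})} c_e(z)\,dz$ and regrouping the edge integrals by the paths used by population $i$ under each operating system yields an expression that matches, term by term, the weighted sum $\sum_{r} y^i_r \sum_{k \in \kappa_r} C_k(\bm{x}) d^r_k \mathds{1}_{k \in K_i}$ appearing in the statement's definition of $C_i$. The main obstacle is precisely the envelope step: a change in $\bm{y}$ generically perturbs $\bm{d}$ across all of $R$, and without the potential-minimisation characterisation of the inner NE supplied by the previous proposition one would have to track these indirect adjustments explicitly and argue their cancellation by hand; once that argument is in place the rest of the calculation is a direct transcription of the proof template used for the network control game.
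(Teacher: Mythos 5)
Your plan diverges from the paper's argument at the crucial step, and that step is exactly where your proposal is left open. The paper proves the required identity $\Phi(\bm{y}')-\Phi(\bm{y})=C_j(\bm{y}')-C_j(\bm{y})$ by a direct computation: it expands the difference of the two Beckmann integrals, regroups the sum over edges as a sum over populations, knowledge types and operating systems, and then argues that after this regrouping only the terms with $i=j$ survive, leaving precisely the definition of $C_j$. No appeal is made to the inner equilibrium minimising $\Phi$, and no envelope theorem appears. You instead propose to control the indirect effect of the outer deviation on the inner equilibrium $\bm{d}$ via an envelope argument, and you yourself flag this as ``the main obstacle'' without carrying it out; that is a genuine gap rather than a different complete proof. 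Moreover, even if executed, the envelope theorem delivers only the \emph{first-order} statement that the derivative of the optimal value $\Phi(\bm{d}(\bm{y}))$ with respect to $y^i_r$ is given by the multipliers of the perturbed constraints, whereas exactness of the potential requires the \emph{finite-difference} identity above. Upgrading one to the other means integrating along a path from $\bm{y}$ to $\hat{\bm{y}}$, and along that path the equilibrium costs $C_k(\bm{x})$ themselves vary, so the integral does not collapse to $C_i(\hat{\bm{y}})-C_i(\bm{y})$ without an additional argument that your sketch does not supply.

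Two further points. First, your premise that the inner Nash equilibrium globally minimises $\Phi$ over the feasibility polytope needs justification: a Nash equilibrium of an exact potential game is a priori only a per-player (coordinatewise) minimiser of the potential, and you need joint convexity of $\Phi$ in $\bm{d}$ (which does hold for non-decreasing edge costs) to promote this to a global minimum. Second, the concluding claim that the regrouped edge integrals ``match term by term'' the expression $\sum_{r} y^i_r \sum_{k \in \kappa_r} C_k(\bm{x}) d^r_k \mathds{1}_{k \in K_i}$ is asserted rather than checked; the two halves of the difference are evaluated at $\hat{\bm{x}}$ and $\bm{x}$ respectively, and establishing that matching is essentially the whole content of the proposition, so it cannot simply be transcribed from the network control game, where what changes under a deviation is the deviator's own flow assignment rather than the equilibrium of a lower-level game. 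To your credit, you have correctly identified a difficulty (the deviation by one population perturbs the equilibrium flows everywhere, not just on that population's edges) that the paper's regrouping treats rather lightly; but identifying the difficulty and resolving it are different things, and the resolution is missing here.
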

    	\begin{proof}
		Consider the change in potential function between strategy distributions $\bm{y}$ and $\bm{y}'=(y'_j, \bm{y}_{-j})$ for some $j \in N$, with respective ICUE profiles $\bm{x}'$ and $\bm{x}$.
		\[
		\Phi(\bm{y}') - \Phi(\bm{y}) = \sum_{e \in E} \int_0^{f_e(\bm{x}')} c_e(\bm{z})d \bm{z} - \sum_{e \in E}  \int_0^{f_e(\bm{x})} c_e(\bm{z})d \bm{z} \] 
		Rewrite as a sum over possible strategies in $S$,   
		 \begin{align*} 
		 = \sum_{i \in N} d_i  \sum_{k \in K_i} \sum_{s \in S_k} \Big[ x'^i_s \sum_{e \in s} & \int_0^{f_e(\bm{x}')} c_e(\bm{z})d\bm{z}  \\ 
		 & - x^i_s  \sum_{e \in s}  \int_0^{f_e(\bm{x})} c_e(\bm{z})d\bm{z} \Big] 
		 \end{align*}
		Rewrite as a sum over operating systems strategies, 
		\begin{align*}
		   = \sum_{i \in N} \sum_{r \in R}  \sum_{k \in \kappa_r} d^r_k \mathds{1}_{ \{k \in K_i \} }  \Big[ y'^i_r & \sum_{e \in K_i}  \int_0^{f_e(\bm{x}')} c_e(\bm{z})d\bm{z} \\
		   &- y^i_r \sum_{e \in K_i}  \int_0^{f_e(\bm{x})} c_e(\bm{z})d\bm{z} \Big] \end{align*}
	    Since the only difference between  $y'^i_r$ and $y^i_r$ is when $i = j$,
	    \begin{align*}
		= \sum_{r \in R}  \sum_{k \in \kappa_r} d^r_k \mathds{1}_{ \{k \in K_j \} } \Big[ y'^j_r & \sum_{e \in K_j}  \int_0^{f_e(\bm{x}')} c_e(\bm{z})d\bm{z}  \\
		&- y^j_r \sum_{e \in K_j}  \int_0^{f_e(\bm{x})} c_e(\bm{z})d\bm{z} \Big] \\ 
		= \sum_{r \in R}  \sum_{k \in \kappa_r} d^r_k \mathds{1}_{ \{k \in K_j \} } \Big[ y'^j_r & C_k(\bm{x}') - y^j_r C_k(\bm{x}) \Big] \\
		 = C_j(\bm{y}') - C_j(\bm{y}) \,\,\,\,\,\,\,\,\,\,\,\,\,\,\,\,\,\,\,\, &
		\end{align*} 
		Thus, $\Phi$ is an exact potential function. By definition, the Network Control Game is an exact potential game.
	\end{proof}
   \begin{cor} \label{cor:os}
    There exists a Nash equilibrium and it is essentially unique. 
    \end{cor}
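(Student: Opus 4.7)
The plan is to derive the corollary from the preceding Proposition together with the same classical result used to obtain Corollary \ref{cor:essentiallyunique}, namely \cite[Theorem 1]{Acemoglu}. The preceding Proposition identifies an exact potential function $\Phi(\bm{y})$ for the operating systems game, built from the integrals $\int_0^{f_e(\bm{x})} c_e(\bm{z})\,d\bm{z}$ along each edge, where $\bm{x}$ is the ICUE induced (via the essentially unique network control equilibrium $\bm{d}$ of the preceding Corollary) by the strategy profile $\bm{y}$. The edge-cost functions $c_e$ are nondecreasing, so each summand of $\Phi$ is convex in the edge flow, and $\Phi$ is continuous in $\bm{y}$ because the intermediate map $\bm{y}\mapsto \bm{d}\mapsto \bm{x}\mapsto (f_e(\bm{x}))_{e\in E}$ is well-defined and continuous (the essential uniqueness from Corollary \ref{cor:essentiallyunique} ensures that the edge flows it produces do not depend on the particular representative of $\bm{d}$).

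Given this, the first step is to note that the strategy space of each population $i\in N$ is the simplex $\{\bm{y}^i\in\mathbb{R}_{\geq 0}^{|R|}:\sum_{r\in R}y^i_r=d_i\}$, which is compact and convex, and the overall strategy space is the product of these simplices. By continuity of $\Phi$ on this compact set, a minimizer $\bm{y}^\star$ exists.

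The second step is to argue that any minimizer of the potential is a Nash equilibrium of the operating systems game. This is the standard potential-game argument: if some population $i$ had a profitable unilateral deviation from $\bm{y}^\star$ to $\bm{y}'$, then by the exact potential property established in the preceding Proposition we would have $\Phi(\bm{y}')-\Phi(\bm{y}^\star)=C_i(\bm{y}')-C_i(\bm{y}^\star)<0$, contradicting minimality of $\bm{y}^\star$. Hence a Nash equilibrium exists.

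The third step delivers essential uniqueness. Since $c_e$ is nondecreasing, each primitive $\int_0^{t} c_e(z)\,dz$ is convex in $t$, so $\Phi$ is a convex function of the induced edge-flow vector $(f_e(\bm{x}))_{e\in E}$. The set of edge-flow vectors attainable from feasible $\bm{y}$ is convex, so the set of minimizers share a common edge-flow profile, and consequently the per-edge costs $c_e(f_e(\bm{x}))$ are constant across equilibria. Because each player's cost $C_i(\bm{y})$ is a sum of these per-edge costs weighted by the choices routed through operating systems covering $i$, the equilibrium payoff to every population is uniquely determined; this is exactly the notion of essential uniqueness inherited from the nonatomic setting of \cite[Theorem 1]{Acemoglu}, to which we appeal as in Corollary \ref{cor:essentiallyunique}. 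The main subtlety — and the only part that requires care beyond citing the reference — is verifying that the composite map $\bm{y}\mapsto (f_e(\bm{x}))_{e\in E}$ inherits enough regularity to quote \cite[Theorem 1]{Acemoglu} directly; this follows because, as observed above, Corollary \ref{cor:essentiallyunique} guarantees that the intermediate network control equilibrium is well-posed at the level of edge flows, so the operating systems game reduces cleanly to a nonatomic congestion game over the enlarged strategy set $R$ and \cite[Theorem 1]{Acemoglu} applies.
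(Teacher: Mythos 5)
Your proposal is correct and follows essentially the same route as the paper, which states the corollary without proof as an immediate consequence of the preceding proposition (the operating systems game is an exact potential game with nondecreasing edge costs) together with \cite[Theorem 1]{Acemoglu}, exactly as in Corollary \ref{cor:essentiallyunique}. Your additional care about the continuity and well-posedness of the composite map $\bm{y}\mapsto\bm{d}\mapsto\bm{x}\mapsto(f_e(\bm{x}))_{e\in E}$ is a reasonable elaboration of a step the paper leaves implicit, but it does not change the argument.
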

    
    Now suppose we have a congestion game with a socially inefficient UE and at least two operating systems controlling the flow. Any operating system that has a small share of control of a population will choose the same strategy as players in a congestion game. Similarly, any operating system with a large share of control of a population plays by routing according to the social optimum. Since the UE of the game is socially inefficient, players choosing the operating system with a large share of control will have a strictly greater cost than those choosing an operating system with a small share of control. Thus, vehicles choosing their operating systems have an incentive to choose the one with the least control. 
    Any operating system that has less control over the population than any other operating system is more desirable to vehicles. So the control must be proportional at the Nash equilibrium.
    \begin{prop}
    The Nash equilibrium of vehicles choosing operating systems is proportional. 
    \end{prop}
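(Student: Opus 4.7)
The plan is a proof by contradiction that builds directly on the informal reasoning given just above the statement. Before starting, I would separate off the trivial case: if the underlying nonatomic congestion game is such that its UE is already socially efficient, then every feasible $\bm{y}$ induces the same edge-flow pattern, all vehicles incur the same cost regardless of operating system, and proportional allocations are Nash equilibria by essential uniqueness (Corollary~\ref{cor:os}). The substantive case is when the UE is strictly inefficient, which I treat below.

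Assume $\bm{y}$ is a Nash equilibrium and, for contradiction, that it is not proportional: there exist a population $i \in N$ and operating systems $r, r' \in R$ with $y^i_r > y^i_{r'}$. The first step is to import the structural description of equilibria of the induced network control game developed in Section~\ref{sec:ineff}. By the case analysis carried out in Example~\ref{ex:pigou2} and Example~\ref{ex:pigou3}, and the Karush-Kuhn-Tucker characterisation in the proof of Proposition~\ref{prop:poa_worstcase}, an operating system holding a larger share of population $i$ routes that population's flow closer to the social optimum (splitting traffic across top and bottom edges in the Pigou reduction), whereas an operating system with a smaller share routes that flow more selfishly (concentrating on the low-cost edge). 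In the inefficient regime, this asymmetry strictly separates per-vehicle costs: vehicles assigned to the smaller-share operating system experience strictly lower equilibrium cost than vehicles assigned to the larger-share one.

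The second step closes the contradiction. A vehicle in population $i$ currently following operating system $r$ can unilaterally reassign itself to $r'$; by the preceding step its post-deviation cost is strictly lower, contradicting the Nash condition $C_i(\bm{y}) \leq C_i(y', \bm{y}_{-i})$ for all $y' \in R$. Hence $y^i_r = y^i_{r'}$ for every $i \in N$ and every $r, r' \in R$, i.e.\ the equilibrium is proportional.

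The main obstacle is making the cost-ranking in the first step rigorous beyond the Pigou and Braess instances computed explicitly in Section~\ref{sec:ineff}. The cleanest route is to exploit the topology independence of worst-case equilibrium ratios (as used in Theorem~\ref{theorem:poa} via \cite{Roughgarden2003}): since the tight worst-case is attained on a Pigou-type two-edge instance and the Nash equilibria of the network control game are essentially unique (Corollary~\ref{cor:essentiallyunique}), the closed-form best responses $x_r = (p|R|^{p-1}+|R|^p)^{-1/p}$ derived in Proposition~\ref{prop:poa_worstcase} (together with their asymmetric counterparts when the control shares differ) furnish the strict inequality directly, and this then lifts to general networks by the same reduction. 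A subtler point to verify is that infinitesimal deviations by a single vehicle do not discontinuously change the induced ICUE $\bm{x}$; this follows because the network control game's equilibrium depends continuously on the demand vector $\bm{d}$ at the essentially unique equilibrium.
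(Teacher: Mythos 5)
Your overall strategy is the same as the paper's: reduce to the Pigou instance, use the threshold $(p|R|^{p-1}+|R|^p)^{-1/p}$ from Proposition~\ref{prop:poa_worstcase} to separate operating systems that route fully selfishly from those that split their flow towards the social optimum, and observe that vehicles migrate to the cheaper, more selfish controllers. The paper phrases this as best-response dynamics terminating at the proportional profile; you phrase it as a contradiction. That recasting is fine, and your observation that a single nonatomic vehicle's deviation leaves the induced ICUE unchanged (it is measure zero) actually simplifies, rather than complicates, the deviation step.

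However, the pivotal claim in your first step is overstated and fails as written: it is not true that $y^i_r > y^i_{r'}$ alone forces a strict per-vehicle cost gap between $r$ and $r'$. If both shares lie below the threshold $(p|R|^{p-1}+|R|^p)^{-1/p}$, both operating systems sit at the corner solution $x_r=d^r$, route all of their vehicles identically on the congested edge, and deliver exactly the same per-vehicle cost despite the unequal shares; no vehicle between them has any incentive to move. The contradiction must instead be extracted from a different pair: note that the all-corner profile is never a Nash equilibrium of the induced network control game (at total demand on the congested edge the marginal term $p\,d^r(\sum_s x_s)^{p-1}>0$ makes every controller strictly prefer to divert some flow), so at the essentially unique induced equilibrium at least one operating system splits its flow, and among splitting systems the per-vehicle cost $1 - \frac{x_r}{d^r}\bigl(1-c_{\mathrm{bottom}}\bigr)$ is strictly increasing in $d^r$ since the interior $x_r$ is common to all of them. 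Comparing the most expensive (splitting, largest-share) system against any fully selfish or smaller splitting one then yields the strict deviation, and equality of costs across all systems forces all shares equal. With that repair your argument goes through on the Pigou instance; like the paper, though, you still invoke topology independence of the \emph{Price of Anarchy} to transfer a statement about \emph{equilibrium structure} to general networks, which is a gap shared with the original proof rather than one you introduced.
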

    \begin{proof} 
    Any operating system with share of control of a population less than $(p|R|^{p-1}+|R|^p)^{-1/p}$ will choose the same inefficient selfish routing as the vehicles of the congestion game. Since this is the UE of the game, the other routing must be greater than or equal to this cost. Thus, vehicles prefer to choose an operating system with less than $(p|R|^{p-1}+|R|^p)^{-1/p}$ control over their population.  Since, $(p|R|^{p-1}+|R|^p)^{-1/p}\geq \frac{1}{|R|}$, the best-response dynamics will end when all operating systems have proportional control of all populations.
    \end{proof}
    
    Following from Proposition \ref{prop:poa_worstcase}, allowing vehicles to choose their operating system enforces the worst possible PoA. 

\section{Conclusion}
  We studied multiple agents optimising the routing of subpopulations in a nonatomic congestion game. As their number grows, the game goes from achieving socially optimal routing to achieving the same inefficient routing as the original congestion game. We have found the exact bound on the price of anarchy of the induced game for polynomial edge-cost functions. Then we used a simple example to show that MARL suffers from this price of anarchy in practice. Additionally, we allowed vehicles to choose their operating system and showed that this only increases the overall inefficiency.
  
  Natural extensions include analysing games with partial operating system control and the rest as selfish players with full or partial information. Other lines of further work, are to discover under what conditions there is an incentive to follow an operating system rather than controlling one's own routing, and to find methods of achieving stable cooperation of operating systems for socially optimal equilibria.

\bibliographystyle{named}
\bibliography{references_1}

\end{document}